\documentclass[11pt]{article}

\usepackage[margin=1in]{geometry}
\usepackage{amsmath,amssymb,amsfonts,bm,mathtools}
\usepackage{graphicx}
\usepackage{booktabs}
\usepackage{array}
\usepackage{tabularx}
\usepackage{multirow}
\usepackage{xcolor}
\usepackage[round,authoryear]{natbib}
\usepackage{hyperref}
\usepackage{enumitem}
\usepackage{comment}
\usepackage{caption}
\usepackage{subcaption}
\usepackage{float}
\usepackage{algorithm}
\usepackage{algpseudocode}
\usepackage{microtype}
\usepackage{amsthm}

\usepackage{amsmath}

\hypersetup{colorlinks=true, linkcolor=blue!55!black, citecolor=blue!55!black, urlcolor=blue!55!black}

\definecolor{mainblue}{RGB}{24,70,130}
\definecolor{softgray}{RGB}{86,92,104}

\newcommand{\R}{\mathbb{R}}

\newcommand{\tr}{\operatorname{tr}}

\newcommand{\argmin}{\operatorname*{arg\,min}}
\newcommand{\argmax}{\operatorname*{arg\,max}}
\newcommand{\sigmoid}{\mathrm{s}}

\newcommand{\cX}{\mathcal{X}}
\newcommand{\cQ}{\mathcal{Q}}

\newcommand{\cS}{\mathcal{S}}

\newtheorem{proposition}{Proposition}

\title{\bfseries Forecast-Ensemble-Based Active Binary-Threshold Query Design for Interval Data Assimilation}
\author{Wataru Hashimoto, Kazumune Hashimoto, and Haru Kuroki}
\date{}
\begin{document}
\maketitle

\begin{abstract}
Data assimilation estimates the evolving state of a dynamical system by combining model forecasts with observations. While many conventional methods assume point-valued measurements, practical sensing systems may instead provide coarse information such as binary, ordinal, inequality, or interval-valued reports. Interval Data Assimilation (IDA) provides a principled framework for assimilating such information, but assumes that the observations to be assimilated are specified in advance. In many sensing settings, however, both where to query and what inequality to ask can be chosen, while only a limited number of queries can be issued. This raises a new observation-design problem: how should informative inequality queries be selected from the forecast uncertainty before their responses are known? We propose \emph{Active Query Design for Interval Data Assimilation} (AQD-IDA), which uses the forecast ensemble to design binary threshold queries prior to the IDA update. AQD-IDA treats both the observation target and the threshold defining the inequality as design variables, thereby allowing the assimilation system to determine not only where to observe but also what question to ask. We develop query-selection criteria that account for forecast uncertainty, redundancy among queries, and anticipated reduction in posterior uncertainty. Experiments on Lorenz--96 and a spherical quasi-geostrophic model show that adaptive query design improves assimilation accuracy under a fixed binary-query budget, with joint observation-target--threshold selection outperforming fixed or separately designed queries. These results demonstrate that actively designing the information supplied to data assimilation can substantially increase the value of coarse observations.
\end{abstract}

\paragraph{Keywords.} data assimilation, interval observations, active sensing, ensemble methods, Lorenz--96, logistic likelihood, optimal observation design.

\section{Introduction}

\begin{figure}[t]
    \centering
    \includegraphics[width=\linewidth]{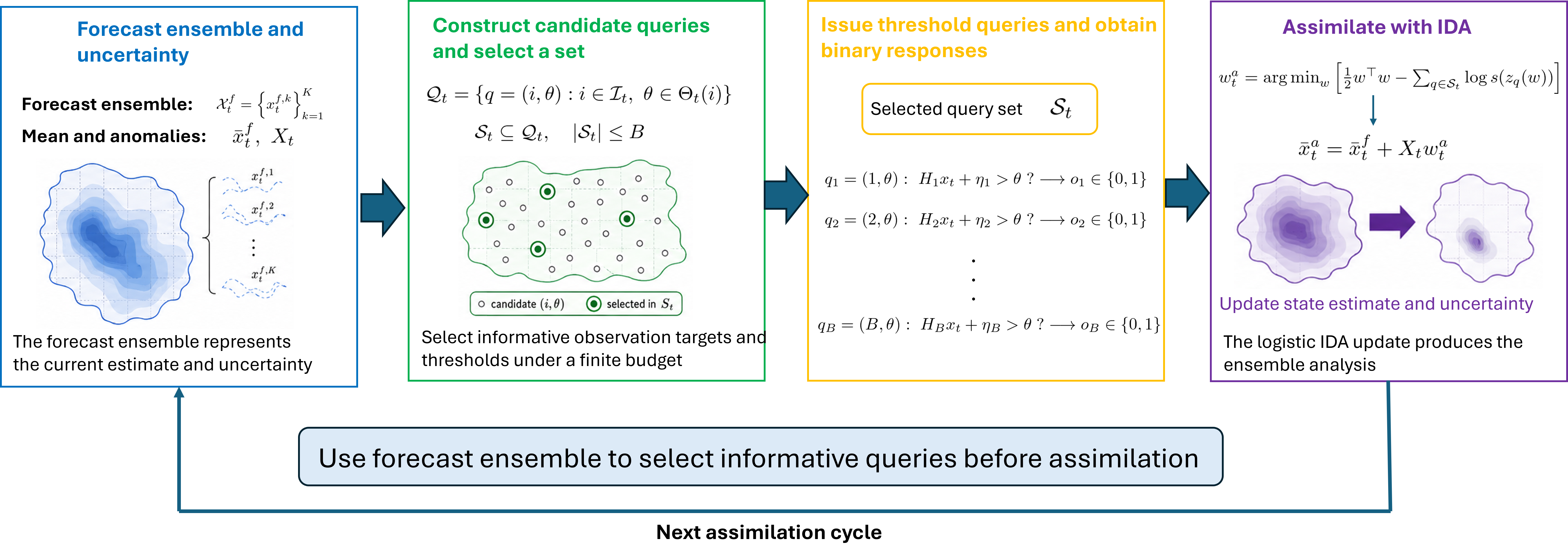}
    \caption{Conceptual workflow of AQD-IDA. The forecast ensemble is used to select
a set of location--threshold queries $q=(i,\theta)$, where $i$ denotes
the observation target and $\theta$ the threshold, under query budget
$B$. The resulting binary responses are assimilated through IDA to
produce the analysis ensemble for the next cycle.}
    \label{fig:concept}
\end{figure}

Data assimilation combines numerical model forecasts with observational information to estimate the evolving state of a dynamical system~\citep{evensen1994sequential,hunt2007letkf}. It is a core methodology in numerical weather prediction, oceanography, hydrology, and many other areas involving high-dimensional nonlinear systems. In a standard formulation, an observation is modeled as a point-valued measurement
\begin{equation}
y_t = H_t x_t + \varepsilon_t,
\qquad \varepsilon_t \sim \mathcal{N}(0,R_t),
\end{equation}
where $x_t\in\R^n$ is the state, $H_t$ is the observation operator, and $R_t$ is the observation error covariance. This leads to a Gaussian likelihood and a quadratic observation cost. Many ensemble data-assimilation methods are built around this point-observation viewpoint, although recent work has also considered more general non-Gaussian ensemble updates~\citep{anderson2022qceff1,anderson2023qceff2}.

In many practical sensing systems, however, point-valued observations are not the natural data type. Low-cost sensors, threshold detectors, image-based classifiers, citizen-science reports, and human observations often provide coarse or thresholded information rather than precise scalar measurements~\citep{leduc2026ida,shah2018semiqualitative,shah2019seaice,srivastava2025binary}. For example, an observing system may report whether a quantity exceeds a threshold or belongs to a coarse category such as low, medium, or high. Such observations are not naturally represented as noisy point values; they are qualitative, inequality, or interval-valued statements. Related data-assimilation studies have treated inequality constraints, truncated distributions, semi-qualitative observations, and categorical or binary data~\citep{thacker2007inequality,lauvernet2009truncated,shah2018semiqualitative,shah2019seaice,srivastava2025binary}. These works show that coarse or constrained observations can be useful in data assimilation and motivate assimilation frameworks beyond Gaussian point-valued observations.

Interval Data Assimilation (IDA) was recently proposed as a unified Bayesian framework~\citep{leduc2026ida} in which observations are interpreted as intervals. A point observation can be represented as a limiting zero-width interval, an inequality observation as an interval extending in one direction, and a finite interval observation as lower and upper bounds. Logistic likelihoods allow these observation types to be represented as smooth inequality likelihoods, yielding a tractable optimization problem in the forecast-ensemble subspace for linear observation operators. In this sense, IDA provides a principled framework for assimilating interval-valued information.

The remaining question, which is the focus of this paper, is how interval-valued information should be acquired before assimilation. In the original IDA formulation, interval observations to be assimilated are assumed to be available in advance. This setting is natural when interval or inequality observations are already determined by the observing system, for example through a detection limit, a known error range, or a physical constraint. In such cases, the resulting observations can simply be assimilated by IDA. In many sensing settings, however, the assimilation system may have several possible observations to acquire, involving different sensing sources, locations, observed variables, observation directions, thresholds, or interval bounds. Acquiring all candidates may be impractical because of human response effort, communication cost, or the computational cost of assimilating many observations. This paper therefore considers the problem of actively selecting a limited number of interval-valued observations under a finite query budget.

This query-design problem is nontrivial because different candidate observations can have very different effects on the IDA update. A query at a location or along a direction with small forecast uncertainty may provide little new information, and a query that is highly redundant with already selected observations may have limited value. For binary threshold queries of the form ``is $H x_t>\theta$?'', the threshold also matters: a threshold far outside the forecast distribution gives an almost deterministic response, whereas a threshold that separates an uncertain part of the forecast ensemble can provide a useful inequality constraint.

This problem is related to existing work on observation design, including targeted observations~\citep{lorenz1998optimal,bishop2001adaptive,majumdar2016review}, observation-impact and forecast-sensitivity methods~\citep{langland2004impact,lien2018efso}, information-based data selection~\citep{migliorini2013information}, sensor placement~\citep{ferrarin2021coastal,castro2020bilevel,mons2017sensor,joshi2009sensor}, and maximum-entropy or submodular sensor selection~\citep{shewry1987maximum,krause2008nearoptimal,krause2014}. 
These studies mainly address where to observe or which existing observations to use. In contrast, the present problem concerns both where to query and what inequality to ask. Specifically, both the observation target and the threshold defining the inequality query are treated as design variables.

Motivated by this gap, this paper proposes
\emph{Active Query Design for Interval Data Assimilation} (AQD-IDA).
At each assimilation cycle, AQD-IDA uses the forecast ensemble to select
a small set of informative binary threshold queries under a finite query
budget. In the main setting considered in this paper, each query is of the
form ``is $H x_t>\theta$?'', which specifies both a scalar quantity
$H x_t$ and a threshold $\theta$. The response is represented as an
inequality observation and assimilated using the IDA update.

We develop AQD-IDA in two steps. We first consider adaptive threshold
selection at fixed observation targets. We then consider joint
observation-target--threshold selection under a finite query budget.
Figure~\ref{fig:concept} illustrates the resulting workflow. Starting from
a forecast ensemble, AQD-IDA selects informative observation targets and
thresholds, issues the corresponding threshold queries, and assimilates
the returned binary responses as inequality observations to produce an
updated analysis ensemble.

The paper is organized as follows. Section~\ref{subsec:ida_summary} reviews the necessary data-assimilation and IDA ingredients. Section~\ref{sec:proposed} develops the proposed AQD-IDA framework. Section~\ref{sec:experiments} evaluates the proposed methods on Lorenz--96 and SHQG benchmarks. Section~\ref{sec:related-work} discusses related work and positions AQD-IDA relative to adaptive observing, sensor selection, and binary-response design. Finally, Section~\ref{sec:conclusion} concludes the paper.

The main contributions are as follows.
\begin{itemize}[leftmargin=1.25em]

\item We formulate a forecast-ensemble-based active query-design problem for
interval-valued observations under a finite query budget, where informative
binary threshold queries are selected before assimilation and the resulting
responses are assimilated through IDA.

\item We develop practical query-selection criteria for adaptive threshold
selection and joint observation-target--threshold selection, using the
forecast ensemble to account for uncertainty, informativeness, and redundancy
among candidate queries.

\item We evaluate the proposed methods on Lorenz--96 and SHQG benchmarks,
demonstrating the benefits of adaptive query design under fixed binary-query
budgets and examining its computational cost and sensitivity to key
assimilation parameters.

\end{itemize}

\section{Review of Interval Data Assimilation}
\label{subsec:ida_summary}

We briefly review Interval Data Assimilation (IDA)~\citep{leduc2026ida}, which serves as the assimilation backbone of this paper. 
IDA extends ensemble data assimilation beyond classical point-valued observations by interpreting observations as probabilistic equality, inequality, or interval statements about the state. 
Using smooth logistic likelihoods, these different observation types are handled in a unified Bayesian objective. 
Given a forecast ensemble, IDA represents the analysis state in the forecast ensemble subspace, solves a likelihood-based optimization problem for the ensemble coefficients, and uses a local Laplace approximation to construct the analysis ensemble. 

At assimilation cycle $t$, let
\begin{equation}
    \mathcal{X}_{t}^{f}
    =
    \left\{
    x_{t}^{f,1},
    \ldots,
    x_{t}^{f,K}
    \right\},
    \qquad
    x_{t}^{f,k}\in\mathbb{R}^{n},
\end{equation}
be the forecast ensemble. Its mean and normalized perturbation matrix are
\begin{align}
    \bar{x}_{t}^{f}
    &=
    \frac{1}{K}
    \sum_{k=1}^{K}x_{t}^{f,k},
    \label{eq:ida_forecast_mean}
    \\
    X_{t}
    &=
    \frac{1}{\sqrt{K-1}}
    \begin{bmatrix}
        x_{t}^{f,1}-\bar{x}_{t}^{f}
        &
        \cdots
        &
        x_{t}^{f,K}-\bar{x}_{t}^{f}
    \end{bmatrix}.
    \label{eq:ida_forecast_perturbation}
\end{align}
The analysis state is represented as
\begin{equation}
    x_{t}
    =
    \bar{x}_{t}^{f}
    +
    X_{t}w,
    \qquad
    w\in\mathbb{R}^{K}.
    \label{eq:ida_ensemble_subspace}
\end{equation}
Here, $w$ is the ensemble coefficient vector, and $I_K$ denotes the $K\times K$ identity matrix. The prior $w\sim\mathcal{N}(0,I_K)$ gives the background penalty
\begin{equation}
    J_{b}(w)
    =
    \frac{1}{2}w^{\top}w.
    \label{eq:ida_background_cost}
\end{equation}
This term discourages corrections that move the analysis too far from the forecast ensemble.

We next describe the observation term. Consider a scalar observation operator $H_i\in\mathbb{R}^{1\times n}$, a threshold $\theta_i\in\mathbb{R}$, and logistic observation noise $\eta_i$ with scale parameter $\sigma_i>0$. Let
\begin{equation}
    s(z)=\frac{1}{1+\exp(-z)},
    \qquad
    \mathbb{P}(\eta_i\le u)=s\left(\frac{u}{\sigma_i}\right).
\end{equation}
Instead of observing the noisy scalar value $H_i x_t+\eta_i$ directly, suppose that the available observation is a binary outcome $o_i\in\{0,1\}$ indicating whether this noisy value exceeds the threshold. For a positive response, the likelihood is
\begin{equation}
    \mathbb{P}(o_i=1\mid x_t)
    =
    \mathbb{P}(H_i x_t+\eta_i>\theta_i\mid x_t)
    =
    s\left(
    \frac{H_{i}x_{t}-\theta_{i}}{\sigma_{i}}
    \right).
    \label{eq:ida_logistic_likelihood_greater}
\end{equation}
Similarly, for a negative response, the likelihood is represented as
\begin{equation}
    \mathbb{P}(o_i=0\mid x_t)
    =
    \mathbb{P}(H_i x_t+\eta_i\le\theta_i\mid x_t)
    =
    s\left(
    \frac{\theta_{i}-H_{i}x_{t}}{\sigma_{i}}
    \right).
    \label{eq:ida_logistic_likelihood_less}
\end{equation}
Thus, upper and lower inequality observations are handled by changing the sign of the logistic argument. Let $b_{i}=1$ for a positive response corresponding to $H_{i}x_{t}>\theta_{i}$ and $b_{i}=-1$ for a negative response corresponding to $H_{i}x_{t}\le\theta_{i}$. Substituting $x_{t}=\bar{x}_{t}^{f}+X_{t}w$, define
\begin{equation}
    z_{i}(w)
    =
    \frac{
    H_{i}\left(\bar{x}_{t}^{f}+X_{t}w\right)-\theta_{i}
    }{
    b_{i}\sigma_{i}
    }.
    \label{eq:ida_logistic_argument}
\end{equation}
Equivalently, with
\begin{equation}
    Y_{i}
    =
    \frac{H_{i}X_{t}}{b_{i}\sigma_{i}},
    \qquad
    d_{i}
    =
    \frac{\theta_{i}-H_{i}\bar{x}_{t}^{f}}{b_{i}\sigma_{i}},
    \label{eq:ida_y_d_definitions}
\end{equation}
we have $z_{i}(w)=Y_{i}w-d_{i}$.

Stacking the $m_t$ logistic inequality terms, define
\begin{equation}
    Y
    =
    \begin{bmatrix}
        Y_{1} \\
        \vdots \\
        Y_{m_t}
    \end{bmatrix}
    \in \mathbb{R}^{m_t\times K},
    \qquad
    d
    =
    \begin{bmatrix}
        d_{1} \\
        \vdots \\
        d_{m_t}
    \end{bmatrix}
    \in \mathbb{R}^{m_t}.
    \label{eq:ida_stacked_y_d}
\end{equation}
Then
\begin{equation}
    z(w)
    =
    \begin{bmatrix}
        z_{1}(w) \\
        \vdots \\
        z_{m_t}(w)
    \end{bmatrix}
    =
    Yw-d.
    \label{eq:ida_stacked_logistic_argument}
\end{equation}
Equality observations can also be included by treating one equality as a pair of lower and upper inequality observations with the same threshold. More general interval observations are handled in the same way using the corresponding lower and upper bounds.

Given $m_{t}$ observations, the IDA analysis coefficient is obtained by solving
\begin{equation}
    w_{t}^{a}
    =
    \arg\min_{w}J_{t}(w),
    \label{eq:ida_analysis_coefficient}
\end{equation}
where
\begin{equation}
    J_{t}(w)
    =
    \frac{1}{2}w^{\top}w
    -
    \sum_{i=1}^{m_{t}}
    \log s(z_{i}(w)).
    \label{eq:ida_objective}
\end{equation}
The first term is the background penalty, and the second term is the negative log-likelihood of the interval observations. Let $s_{i}=s(z_{i}(w))$ and $s=(s_{1},\ldots,s_{m_{t}})^{\top}$, and let $\mathbf{1}\in\mathbb{R}^{m_t}$ denote the vector whose entries are all one. The gradient and Hessian are
\begin{align}
    \nabla J_{t}(w)
    &=
    w
    -
    Y^{\top}(\mathbf{1}-s),
    \label{eq:ida_gradient}
    \\
    \nabla^{2}J_{t}(w)
    &=
    I_K
    +
    Y^{\top}D(w)Y,
    \label{eq:ida_hessian}
\end{align}
where
\begin{equation}
    D(w)
    =
    \operatorname{diag}
    \left(
        s_{1}(1-s_{1}),
        \ldots,
        s_{m_{t}}(1-s_{m_{t}})
    \right).
    \label{eq:ida_curvature_matrix}
\end{equation}
The diagonal factor $s_i(1-s_i)$ is the scalar curvature of the logistic inequality loss. For a single inequality likelihood with argument
\begin{equation}
    z_i=\frac{H_i x_t-\theta_i}{b_i\sigma_i},
\end{equation}
the corresponding Hessian contribution in state space is proportional to
\begin{equation}
    \frac{1}{\sigma_i^2}s_i(1-s_i)H_i^\top H_i.
    \label{eq:ida_scalar_curvature_weight}
\end{equation}
Thus, the local curvature is largest when $z_i=0$, that is, when the threshold is close to the queried scalar value $H_i x_t$. This property will be used in the proposed framework to quantify the informativeness of candidate observations.

The same Hessian expression also gives the well-posedness of the IDA update. Since $D(w)\succeq0$, we have $\nabla^2J_t(w)\succeq I_K$. Hence $J_t$ is $1$-strongly convex in $w$. Moreover, the background term $\frac12 w^\top w$ makes $J_t(w)\to\infty$ as $\|w\|\to\infty$. Therefore, for any finite set of equality, inequality, or interval-valued observations represented by the logistic likelihood terms above, the IDA objective has a unique global minimizer $w_t^a$. 

The analysis mean is
\begin{equation}
    \bar{x}_{t}^{a}
    =
    \bar{x}_{t}^{f}
    +
    X_{t}w_{t}^{a}.
    \label{eq:ida_analysis_mean}
\end{equation}
To construct the analysis ensemble, IDA uses a Laplace approximation around $w_{t}^{a}$. The analysis covariance in ensemble coefficient space is approximated by
\begin{equation}
    P_{w,t}^{a}
    =
    \left[
    \nabla^{2}J_{t}(w)
    \big|_{w=w_{t}^{a}}
    \right]^{-1}
    =
    \left[
    I_K
    +
    Y^{\top}D(w_{t}^{a})Y
    \right]^{-1}.
    \label{eq:ida_laplace_covariance}
\end{equation}
A square root of $P_{w,t}^{a}$ is then used to transform the forecast perturbations and generate the analysis ensemble. The resulting analysis ensemble is propagated by the forecast model to start the next assimilation cycle.

For clarity, one IDA cycle can be summarized as follows:
\begin{enumerate}
    \item Compute $\bar{x}_{t}^{f}$ and $X_{t}$ from the forecast ensemble.
    \item Represent the analysis state as $x_{t}=\bar{x}_{t}^{f}+X_{t}w$.
    \item Convert the available equality, inequality, or interval-valued observations into logistic likelihood terms.
    \item Minimize $J_{t}(w)$ to obtain the analysis coefficient $w_{t}^{a}$.
    \item Compute the analysis mean $\bar{x}_{t}^{a}=\bar{x}_{t}^{f}+X_{t}w_{t}^{a}$.
    \item Approximate the analysis covariance by the inverse Hessian at $w_{t}^{a}$.
    \item Generate the analysis ensemble and propagate it to the next assimilation cycle.
\end{enumerate}
\subsection{Motivation for active query design}

The formulation above concerns the assimilation of interval or
inequality observations once they are available. When observation
sources and reporting formats are fixed in advance, the observations to
be assimilated are predetermined. For example, sensor detection limits,
known error ranges, or physical constraints may directly produce
interval or inequality observations that can be assimilated by IDA
~\citep{leduc2026ida,thacker2007inequality,lauvernet2009truncated,
shah2018semiqualitative}.

In many applications, however, it may be impractical to acquire all available observations. Human and citizen reports require effort from respondents~\citep{see2019citizenscience,assumpcao2018citizen}. Sensor networks may also be constrained by communication bandwidth or energy consumption~\citep{rawat2014wireless}. Acquiring, processing, and assimilating many
observations also increases computational and data-handling costs,
motivating observation thinning and selection in data-assimilation
systems~\citep{ochotta2005adaptive}. Under a finite query budget, the
assimilation system must therefore decide which observations to request.

For a threshold-based observation, a query is defined by two choices:
the observation target---such as a sensor, location, or variable---and
the threshold. The query asks whether the quantity at the selected target
exceeds the selected threshold. A positive response provides a lower
bound on the quantity, whereas a negative response provides an upper
bound. Each target--threshold pair therefore defines a distinct candidate
query whose response yields a one-sided interval observation.

The threshold can serve as a practical design variable in several sensing architectures. In adaptive
one-bit quantization, a sensor compares its continuous measurement with
a threshold set before acquisition and returns only the binary comparison
result to reduce sensing and communication costs~\citep{fang2016adaptive}. If the threshold is programmable, the
acquisition system can set it directly. Alternatively, when several
binary sensors have different fixed thresholds, selecting a sensor also
selects its associated threshold. Both architectures reduce observation
acquisition to choosing among the available target--threshold pairs.

Human reporting provides another setting in which both the reporting location and the reference threshold can be selected. A reporter may be unable to estimate a water level accurately but may be able to determine whether floodwater has exceeded a specified visible reference, such as a mark on a bridge pier or the height of a roadside curb~\citep{see2019citizenscience,assumpcao2018citizen}. When several reporting locations or reference levels are available, the query specifies both where the report is requested and which reference level is used. The response then provides an upper or lower bound on the water level.

Threshold queries may also be preferable when complete measurements
cannot be shared because of privacy, confidentiality, or institutional
restrictions. For example, an organization may report whether a specified
threshold has been exceeded without disclosing the full measurement or
detailed sensor information~\citep{idan2020prshare}. Such responses do not
by themselves guarantee privacy, particularly when adaptive queries are
repeated, but they can reduce the amount of raw information that must be
disclosed.

The benefit of selecting rather than fixing the query is that the
informativeness of a target--threshold pair depends on the current
forecast ensemble. A target with little forecast uncertainty is unlikely
to provide much new information. Even at an uncertain target, a threshold
far outside the forecast distribution produces an almost certain
response and is therefore of limited value. By contrast, a threshold
that separates plausible forecast values can provide a more informative
one-sided constraint
~\citep{srivastava2025binary,fang2016adaptive}. Multiple queries may also
provide redundant information when they probe similar directions of
forecast uncertainty
~\citep{majumdar2016review,krause2008nearoptimal}. Because the forecast
distribution evolves between assimilation cycles, the most informative
targets and thresholds may change over time. These considerations
motivate selecting observation targets and thresholds from the current
forecast ensemble before assimilation.

\section{Proposed Framework: Active Query Design for IDA}
\label{sec:proposed}

In this section, we propose Active Query Design for Interval Data Assimilation
(AQD-IDA). The method is designed for settings where several candidate queries
are available before assimilation, but only a limited number of them can be
issued under a finite query budget. AQD-IDA uses the forecast ensemble to select
informative queries. In the main setting considered here, each query is a binary
threshold query that asks whether a chosen state component or observation
variable exceeds a specified threshold. The resulting binary response is
represented as an inequality observation and is then assimilated using IDA.

We first consider adaptive threshold selection at fixed observation targets.
We then consider joint observation-target--threshold selection under a finite
query budget. The supporting derivations are presented alongside the
corresponding query-selection criteria, and the submodularity property of the
fixed log-det surrogate is stated together with the greedy selection rule.
\subsection{Adaptive threshold selection at fixed observation targets}

Suppose the observation location $i$ is fixed and the sensing platform can issue a binary threshold query
\begin{equation}
    q_{i,\theta}:\quad \text{``Is } x_{i,t}+\eta_{i,t}>\theta \text{?''},
\end{equation}
where $\eta_{i,t}$ is logistic observation noise with scale $\sigma$. The binary outcome is
\begin{equation}
    o_{i,t}(\theta)=\mathbf{1}\{x_{i,t}+\eta_{i,t}>\theta\}\in\{0,1\}.
\end{equation}
After observation, the outcome is represented in IDA by the corresponding logistic inequality likelihood. A positive response contributes
\begin{equation}
    \mathbb{P}(o_{i,t}=1\mid x_{i,t})
    =
    \sigmoid\!\left(\frac{x_{i,t}-\theta}{\sigma}\right),
\end{equation}
whereas a negative response contributes
\begin{equation}
    \mathbb{P}(o_{i,t}=0\mid x_{i,t})
    =
    \sigmoid\!\left(\frac{\theta-x_{i,t}}{\sigma}\right).
\end{equation}
Thus, a positive response is treated as a soft lower inequality, while a negative response is treated as a soft upper inequality. The question is how to choose $\theta$ before observing $o_{i,t}$.

\paragraph{Entropy-based threshold selection.}
We first introduce a simple criterion that selects a threshold by making the two possible binary responses as balanced as possible. The predictive probability of outcome $1$ is approximated by
\begin{equation}
    p_i(\theta)
    = \frac{1}{K}\sum_{k=1}^{K}
    \sigmoid\!\left(\frac{x_{i,t}^{f,k}-\theta}{\sigma}\right),
\end{equation}
where $x_{i,t}^{f,k}$ denotes the $i$-th component of the $k$-th forecast ensemble member at assimilation cycle $t$.
The binary entropy is
\begin{equation}
    \mathcal{H}_i(\theta)
    = -p_i(\theta)\log p_i(\theta)
    - (1-p_i(\theta))\log(1-p_i(\theta)).
\end{equation}
The entropy-based rule selects
\begin{equation}
    \theta_i^\star = \argmax_{\theta\in\Theta_t(i)} \mathcal{H}_i(\theta),
\end{equation}
where $\Theta_t(i)$ is the candidate threshold set for target $i$ at cycle $t$. Since binary entropy is maximized at $p_i(\theta)=1/2$, this rule chooses a threshold for which the two possible responses are approximately balanced. When the observation noise is small, this threshold is close to the predictive median of the forecast ensemble at location $i$.

\paragraph{Curvature-based threshold selection.}
We next introduce a criterion based on the local curvature of the
logistic IDA likelihood. As reviewed in
Section~\ref{subsec:ida_summary}, a logistic inequality likelihood
contributes the scalar curvature weight
$s(z)(1-s(z))/\sigma^2$ to the IDA Hessian. This weight is largest
when the threshold is close to the scalar value being evaluated.

Because the true queried value is unknown before acquisition, we
evaluate this curvature over the forecast ensemble. For component $i$
and candidate threshold $\theta$, define the ensemble-averaged
curvature
\begin{equation}
    C_i(\theta)
    =
    \frac{1}{K}\sum_{k=1}^{K}
    \frac{
    \sigmoid(z_{i,k})
    \bigl(1-\sigmoid(z_{i,k})\bigr)
    }{\sigma^2},
    \qquad
    z_{i,k}
    =
    \frac{x_{i,t}^{f,k}-\theta}{\sigma}.
\end{equation}
The curvature-based rule selects
\begin{equation}
    \theta_i^\star
    =
    \argmax_{\theta\in\Theta_t(i)} C_i(\theta).
\end{equation}
This rule favors thresholds lying in regions of the forecast
distribution where the logistic likelihood has high average
curvature. Such thresholds make a larger local contribution to the
IDA Hessian and are therefore expected to provide more informative
inequality observations under the local Laplace approximation.

\subsection{AQD-IDA: joint observation-target--threshold query design}
\label{subsec:joint_query_design}

The preceding subsection adapts the threshold while keeping the observation
target fixed. AQD-IDA extends the design problem by selecting both the target
and the threshold. A query is written as $q=(i,\theta)$, where $i$ identifies a
scalar observation target and $\theta$ is the threshold. Depending on the
application, $i$ may represent a state component, a spatial location, a
physical variable and vertical level, or, more generally, a scalar observation
operator $H_i$. At assimilation cycle $t$, let $\mathcal I_t$ denote the set of
available observation targets and let $\Theta_t(i)$ be the candidate threshold
set for target $i$. The candidate query set is
\begin{equation}
    \cQ_t
    =
    \{q=(i,\theta): i\in\mathcal I_t,\ \theta\in\Theta_t(i)\}.
    \label{eq:candidate_query_set}
\end{equation}
Given a budget $B$, the design problem is to choose
\begin{equation}
    \cS_t\subseteq\cQ_t,
    \qquad
    |\cS_t|\le B.
    \label{eq:query_budget}
\end{equation}
Each selected query asks whether the noisy scalar quantity
$H_i x_t+\eta_q$ exceeds $\theta$, where $\eta_q$ is an independent
logistic-noise realization associated with query $q$. Thus, when multiple
thresholds are selected for the same observation target, they represent
repeated independent binary acquisitions rather than multiple thresholdings
of a single noisy scalar measurement. The resulting binary responses are
assimilated through the corresponding logistic inequality likelihoods used
by IDA.

\paragraph{Outcome-conditioned expected-gain criterion.}
A direct criterion is the expected reduction in total state variance. With the
normalization in \eqref{eq:ida_forecast_perturbation}, the forecast covariance
is
\begin{equation}
    P_{x,t}^f=X_tX_t^\top.
    \label{eq:forecast_covariance}
\end{equation}
Let $P_{x,t}^a(q,o)$ denote the approximate analysis covariance that would
result if query $q$ were issued and the response were $o\in\{0,1\}$. The
forecast ensemble gives the predictive probability of a positive response,
\begin{equation}
    p_t(q)
    :=
    \mathbb P(o=1\mid q,\cX_t^f)
    \approx
    \frac{1}{K}\sum_{k=1}^{K}
    \sigmoid\!\left(
        \frac{H_i x_t^{f,k}-\theta}{\sigma_q}
    \right),
    \label{eq:predictive_positive_probability}
\end{equation}
where $\sigma_q$ is the logistic noise scale. The expected analysis total variance is the probability-weighted average of the two possible outcome-conditioned analysis variances. We define the expected-gain score as the reduction from the forecast total variance to this expected analysis total variance:
\begin{align}
    \mathcal U_t^{\mathrm{full}}(q)
    ={}&
    \tr(P_{x,t}^f)
    -p_t(q)\tr\!\left(P_{x,t}^a(q,1)\right)
    \nonumber\\
    &-\{1-p_t(q)\}\tr\!\left(P_{x,t}^a(q,0)\right).
    \label{eq:utility}
\end{align}
To evaluate the outcome-conditioned covariances, consider a candidate query
$q=(i,\theta)$ and a hypothetical response $o\in\{0,1\}$. Define
\begin{align}
    z_q(w)
    &:=
    \frac{H_i(\bar{x}_t^f+X_tw)-\theta}{\sigma_q},
    \label{eq:query_argument_in_w}
    \\
    \ell_q(w;o)
    &:=
    -o\log \sigmoid(z_q(w))
    -(1-o)\log\{1-\sigmoid(z_q(w))\}.
    \label{eq:query_negative_loglikelihood}
\end{align}
For query scoring, define the single-query hypothetical IDA objective
\begin{equation}
    J_t^{q,o}(w)
    :=
    \frac{1}{2}w^\top w+\ell_q(w;o),
    \qquad
    w_t^a(q,o):=\argmin_w J_t^{q,o}(w).
    \label{eq:outcome_conditioned_objective}
\end{equation}
A Laplace approximation at the corresponding hypothetical MAP point gives the
local coefficient covariance
\begin{equation}
    P_{w,t}^a(q,o)
    :=
    \left[
        \nabla_w^2J_t^{q,o}(w)
        \big|_{w=w_t^a(q,o)}
    \right]^{-1}.
    \label{eq:outcome_laplace_covariance}
\end{equation}
The corresponding state-space covariance in the forecast ensemble subspace is
\begin{equation}
    P_{x,t}^a(q,o)
    :=
    X_tP_{w,t}^a(q,o)X_t^\top.
    \label{eq:outcome_state_covariance}
\end{equation}
Substituting \eqref{eq:outcome_state_covariance} into \eqref{eq:utility}
gives the outcome-conditioned expected-gain score for each candidate query.

\paragraph{Rank-one expected-gain proxy.}
The outcome-conditioned criterion requires two hypothetical IDA optimizations for every
candidate query, together with two Hessian-based covariance evaluations. This
becomes expensive when the candidate-query set is large. We therefore derive
a less expensive local approximation of the same trace-reduction objective.
Recall that the IDA state is parameterized as
\begin{equation}
    x(w)=\bar{x}_t^f+X_t w.
    \label{eq:query_state_parameterization}
\end{equation}
For a query $q=(i,\theta)$, define
\begin{align}
    z_{q,k}
    &:=\frac{H_i x_t^{f,k}-\theta}{\sigma_q},
    \label{eq:query_standardized_argument}\\
    \omega_q
    &:=\frac{1}{K}\sum_{k=1}^{K}
    \sigmoid(z_{q,k})\{1-\sigmoid(z_{q,k})\},
    \label{eq:query_curvature}\\
    a_q
    &:=\frac{(H_iX_t)^\top}{\sigma_q}
    \in\R^K.
    \label{eq:query_sensitivity}
\end{align}
Here $\omega_q$ is the logistic-likelihood curvature averaged over the
forecast ensemble. It is large when the threshold lies near plausible forecast
values, so that both binary responses remain possible, and small when the
response is almost predetermined. The vector $a_q$ describes how the queried
quantity changes with the ensemble coefficient $w$. In particular, the
standardized argument in \eqref{eq:query_argument_in_w} can be written as
\begin{equation}
    z_q(w)
    =
    \frac{H_i\bar{x}_t^f-\theta}{\sigma_q}
    +a_q^\top w.
    \label{eq:query_argument_affine_form}
\end{equation}

The query negative log-likelihood in
\eqref{eq:query_negative_loglikelihood} has Hessian
\begin{equation}
    \nabla_w^2\ell_q(w;o)
    =
    \sigmoid(z_q(w))\{1-\sigmoid(z_q(w))\}
    a_qa_q^\top.
    \label{eq:query_likelihood_hessian}
\end{equation}
Equation~\eqref{eq:query_likelihood_hessian} is the exact local curvature at a
specified coefficient $w$. Because $w$ is unknown when the query is selected,
we replace its scalar curvature by the forecast-ensemble average $\omega_q$ and
define
\begin{equation}
    M_q
    :=
    \omega_q a_qa_q^\top
    \succeq0.
    \label{eq:local_fisher_contribution}
\end{equation}
The matrix $M_q$ is therefore the forecast-averaged local information supplied
by query $q$ in the ensemble-coefficient space. The scalar $\omega_q$ controls
the strength of the information, while $a_qa_q^\top$ identifies the coefficient
direction informed by the query.

The local negative log-posterior is the sum of the Gaussian background penalty
and the query negative log-likelihood. Their Hessians therefore add. Since the
background term $\tfrac12 w^\top w$ has Hessian $I_K$, the local posterior
precision is approximated by
\begin{equation}
    I_K+M_q.
    \label{eq:rank_one_precision_update}
\end{equation}
Accordingly, the coefficient covariance changes from $I_K$ to
$(I_K+M_q)^{-1}$ under this local approximation.

Let
\begin{equation}
    G_t:=X_t^\top X_t.
    \label{eq:gram_matrix}
\end{equation}
Mapping the coefficient covariances through $X_t$ gives the approximate
reduction in total state variance,
\begin{equation}
    \widehat{\mathcal U}_{\mathrm{EG}}(q)
    :=
    \tr(G_t)
    -
    \tr\!\left[G_t(I_K+M_q)^{-1}\right].
    \label{eq:rank_one_trace_reduction}
\end{equation}
Using the Sherman--Morrison formula, this becomes
\begin{equation}
    \widehat{\mathcal U}_{\mathrm{EG}}(q)
    =
    \frac{\omega_q a_q^\top G_ta_q}
    {1+\omega_q a_q^\top a_q}.
    \label{eq:expected_gain_proxy}
\end{equation}
This proxy retains the state-space trace objective of the outcome-conditioned criterion but
avoids the two outcome-conditioned IDA analyses.

\paragraph{Curvature--variance proxy.}
A still lighter criterion follows from the local log-determinant information
gain of a single query. Log-determinant criteria measure the change in the
volume of a local Gaussian uncertainty ellipsoid and are commonly used in
D-optimal experimental design~\citep{chaloner1989bayesian}. Using the same
$M_q$ as in \eqref{eq:local_fisher_contribution},
\begin{align}
    \Delta_D^{(1)}(q)
    &:={}
    \log\det(I_K+M_q)
    =
    \log\!\left(1+\omega_q a_q^\top a_q\right),
    \label{eq:single_query_logdet_gain}
\end{align}
where the superscript $(1)$ indicates a single-query gain and the second
equality follows from the matrix determinant lemma. Since $\log(1+x)$ is
strictly increasing for $x\geq 0$,
\begin{equation}
    \arg\max_{q\in\cQ_t}\Delta_D^{(1)}(q)
    =
    \arg\max_{q\in\cQ_t}\omega_q a_q^\top a_q.
    \label{eq:single_query_ranking}
\end{equation}
We therefore define the curvature--variance score
\begin{equation}
    \psi_{\mathrm{cv}}(q)
    :=
    \omega_q a_q^\top a_q.
    \label{eq:proxy_score}
\end{equation}
Thus, $\psi_{\mathrm{cv}}$ gives exactly the same single-query ranking as the
local log-det gain under the rank-one information surrogate.
Since
\begin{equation}
    a_q^\top a_q
    =
    \frac{H_iP_{x,t}^fH_i^\top}{\sigma_q^2},
    \label{eq:sensitivity_variance_relation}
\end{equation}
where $H_iP_{x,t}^fH_i^\top$ is the forecast variance of the queried scalar,
the score can be written as
\begin{equation}
    \psi_{\mathrm{cv}}(q)
    =
    \frac{\omega_q}{\sigma_q^2}
    H_iP_{x,t}^fH_i^\top.
\end{equation}
Thus, the score combines the forecast uncertainty of the queried scalar with
the forecast-averaged logistic curvature associated with the candidate
threshold. It is large when substantial uncertainty remains in the queried
quantity and the threshold lies in an informative part of the forecast
distribution. The score is a local information criterion, not an exact
prediction of the reduction in nonlinear filtering error.

\subsection{Log-det greedy selection using a local Fisher surrogate}
\label{subsec:logdet_greedy}

The rank-one expected-gain and curvature--variance proxies rank candidates
individually. As a result, several high-scoring queries may still probe nearly
the same ensemble-space direction. The log-det greedy method instead evaluates
each candidate relative to the queries already selected.

For a selected query set $\cS\subseteq\cQ_t$, define the approximate local
precision
\begin{equation}
    A(\cS)
    :=
    A_0+\sum_{q\in\cS}M_q,
    \qquad
    A_0=I_K.
    \label{eq:set_precision}
\end{equation}
Here $A_0=I_K$ is the precision induced by the Gaussian background term. The
set utility is
\begin{equation}
    F_D(\cS)
    :=
    \log\det A(\cS)-\log\det A_0.
    \label{eq:logdet_surrogate}
\end{equation}
Given the current set $\cS$, the gain from adding query $q$ is
\begin{align}
    \Delta_D(q\mid\cS)
    &:={}
    F_D(\cS\cup\{q\})-F_D(\cS)
    \nonumber\\
    &=
    \log\!\left(
        1+\omega_q a_q^\top A(\cS)^{-1}a_q
    \right).
    \label{eq:logdet_marginal}
\end{align}
The quadratic form $a_q^\top A(\cS)^{-1}a_q$ measures the remaining local
uncertainty in the direction probed by $q$. Once similar directions have
already been constrained, this quantity and the marginal gain decrease.

\begin{proposition}[Submodularity of the local log-det query utility]
\label{prop:submodular_logdet}
Let $\cQ_t$ be a finite candidate query set, let $A_0\succ0$, and let each query
$q\in\cQ_t$ contribute a fixed rank-one matrix
$M_q=\omega_q a_q a_q^\top\succeq0$, with $\omega_q\ge0$, where $M_q$ is held
fixed during the greedy selection. Then
\begin{equation}
    F_D(\cS)
    =
    \log\det\!\left(A_0+\sum_{q\in\cS}M_q\right)
    -\log\det A_0
\end{equation}
is normalized, monotone nondecreasing, and submodular.

Let $\cS_{\mathrm{greedy}}$ denote the set obtained by greedily selecting,
at each step, the query with the largest marginal gain
$\Delta_D(q\mid\cS)$ until $B$ queries have been selected, and let
\begin{equation}
    \cS^\star
    \in
    \arg\max_{\cS\subseteq\cQ_t,\ |\cS|\le B}
    F_D(\cS)
\end{equation}
denote an optimal query set under the same cardinality constraint. Then
\begin{equation}
    F_D(\cS_{\mathrm{greedy}})
    \ge
    \left(1-\frac{1}{e}\right)
    F_D(\cS^\star).
\end{equation}
Thus, the greedy selection attains at least approximately $63.2\%$ of the
optimal value of the fixed surrogate objective $F_D$.
\end{proposition}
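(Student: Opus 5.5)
The argument proceeds in three steps: verify normalization and monotonicity directly, establish submodularity through a closed form for the marginal gain, and then invoke the classical guarantee for greedy maximization of normalized monotone submodular functions under a cardinality constraint \citep{krause2014}.

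Normalization is immediate, since $A(\emptyset)=A_0$ gives $F_D(\emptyset)=0$. For the marginal gain, note that $A(\cS)\succeq A_0\succ0$ is invertible for every $\cS$. The matrix determinant lemma then gives, for $q\notin\cS$,
\begin{equation*}
\Delta_D(q\mid\cS)=\log\det\!\left(A(\cS)+\omega_q a_qa_q^\top\right)-\log\det A(\cS)=\log\!\left(1+\omega_q a_q^\top A(\cS)^{-1}a_q\right).
\end{equation*}
This reproduces \eqref{eq:logdet_marginal}. Because $A(\cS)^{-1}\succ0$ and $\omega_q\ge0$, the argument of the logarithm is at least $1$, so $\Delta_D(q\mid\cS)\ge0$. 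Adding elements one at a time from $\cS$ to any superset $\mathcal T$ then yields $F_D(\cS)\le F_D(\mathcal T)$, which is monotonicity.

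Submodularity is the only step requiring an operator inequality. It suffices to show $\Delta_D(q\mid\cS)\ge\Delta_D(q\mid\mathcal T)$ for $\cS\subseteq\mathcal T$ and $q\notin\mathcal T$. Since every $M_{q'}\succeq0$, we have $A(\mathcal T)=A(\cS)+\sum_{q'\in\mathcal T\setminus\cS}M_{q'}\succeq A(\cS)\succ0$. Matrix inversion is operator monotone decreasing on positive definite matrices, so $A(\mathcal T)^{-1}\preceq A(\cS)^{-1}$. I would justify this by the standard argument: $A(\mathcal T)\succeq A(\cS)$ implies $A(\cS)^{-1/2}A(\mathcal T)A(\cS)^{-1/2}\succeq I$, and hence its inverse is $\preceq I$. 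It follows that $a_q^\top A(\mathcal T)^{-1}a_q\le a_q^\top A(\cS)^{-1}a_q$. Multiplying by $\omega_q\ge0$ and applying the increasing map $x\mapsto\log(1+x)$ gives the diminishing-returns inequality. This step is the main technical point, although it is standard.

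Because $F_D$ is normalized, monotone and submodular on the finite ground set $\cQ_t$, the Nemhauser--Wolsey--Fisher theorem applies to the greedy rule that adds $\arg\max_q\Delta_D(q\mid\cS)$ for $B$ steps. It yields $F_D(\cS_{\mathrm{greedy}})\ge(1-(1-1/B)^B)F_D(\cS^\star)\ge(1-1/e)F_D(\cS^\star)$. Monotonicity lets us compare with an optimum of size exactly $B$ without loss of generality.

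If a self-contained argument is preferred, the bound follows from a short recursion. Let $\cS_j$ be the greedy set after $j$ steps. Submodularity and monotonicity give $F_D(\cS^\star)-F_D(\cS_j)\le\sum_{q\in\cS^\star}\Delta_D(q\mid\cS_j)\le B\,[F_D(\cS_{j+1})-F_D(\cS_j)]$. Iterating, $F_D(\cS^\star)-F_D(\cS_B)\le(1-1/B)^BF_D(\cS^\star)\le e^{-1}F_D(\cS^\star)$, which is the stated bound.

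The hypothesis that each $M_q$ is held fixed during selection is what makes $F_D$ a deterministic set function, so that the whole argument applies.
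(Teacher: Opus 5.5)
Your proof is correct and follows essentially the same route as the paper: normalization from $F_D(\emptyset)=0$, monotonicity from $M_q\succeq0$, diminishing returns from $A(\mathcal{T})^{-1}\preceq A(\cS)^{-1}$ applied to the closed-form gain $\log(1+\omega_q a_q^\top A(\cS)^{-1}a_q)$, and then the Nemhauser--Wolsey--Fisher guarantee. Your additions make explicit what the paper only asserts or cites: you derive monotonicity from nonnegative marginal gains, justify that matrix inversion is operator monotone decreasing, and give a self-contained $(1-1/B)^B$ recursion.
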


\begin{proof}
Normalization follows from $F_D(\emptyset)=0$. Because $M_q\succeq0$ for every
$q$, adding a query can only increase $A(\cS)$ in the positive-semidefinite
order and therefore cannot decrease its log determinant. Hence $F_D$ is
monotone.

For submodularity, let $\cS_1\subseteq\cS_2\subseteq\cQ_t$ and
$q\in\cQ_t\setminus\cS_2$. Then
$A(\cS_1)\preceq A(\cS_2)$, and inversion reverses this order:
\begin{equation}
    A(\cS_2)^{-1}\preceq A(\cS_1)^{-1}.
\end{equation}
It follows that
\begin{equation}
    a_q^\top A(\cS_2)^{-1}a_q
    \le
    a_q^\top A(\cS_1)^{-1}a_q.
\end{equation}
Since $\omega_q\ge0$ and $\log(1+x)$ is nondecreasing for $x\ge0$,
substitution into \eqref{eq:logdet_marginal} gives
\begin{equation}
    \Delta_D(q\mid\cS_2)
    \le
    \Delta_D(q\mid\cS_1),
\end{equation}
which is the diminishing-returns property. The bound
\begin{equation}
    F_D(\cS_{\mathrm{greedy}})
    \ge
    \left(1-\frac{1}{e}\right)
    F_D(\cS^\star)
\end{equation}
then follows from the classical result for normalized monotone submodular
maximization under a cardinality constraint~\citep{nemhauser1978}.
\end{proof}

After a query is selected, $A(\cS)^{-1}$ can be updated by the
Sherman--Morrison formula,
\begin{equation}
    A(\cS\cup\{q\})^{-1}
    =
    A(\cS)^{-1}
    -
    \frac{
        \omega_q A(\cS)^{-1}a_qa_q^\top A(\cS)^{-1}
    }{
        1+\omega_q a_q^\top A(\cS)^{-1}a_q
    }.
    \label{eq:sherman_morrison_set_update}
\end{equation}
Thus, log-det greedy accounts for redundancy without hypothetical IDA analyses
for every candidate response. It is more expensive than one-shot
curvature--variance ranking but substantially lighter than the outcome-conditioned
expected-gain criterion.

A practical no-duplicate-target variant restricts the feasible set so that at
most one threshold is selected for each target during one assimilation cycle.
This restriction encourages target diversity but changes the feasible family.
The approximation bound in Proposition~\ref{prop:submodular_logdet} therefore
applies to greedy maximization of $F_D$ under the cardinality constraint
$|\cS|\le B$, and does not directly apply to the no-duplicate-target variant
or to the final nonlinear filtering RMSE. The realized responses are still
assimilated by solving the original logistic IDA objective.

\subsection{Comparison of the joint query-selection criteria}
\label{subsec:criterion_comparison}

The four criteria differ in the uncertainty measure they approximate and in
whether they account for interactions among selected queries. The outcome-conditioned
expected-gain criterion evaluates both hypothetical responses and directly
estimates their expected state-space trace reduction, but it is the most
expensive. The rank-one expected-gain proxy uses a local Fisher update while
retaining the same trace objective. The curvature--variance proxy is the
lightest method and gives the same candidate ranking as the single-query log-det
gain under the local rank-one information surrogate. The log-det greedy method
extends the same local information model to a set objective and explicitly
discounts redundant directions.

In the experiments, the outcome-conditioned expected-gain criterion, rank-one
expected-gain proxy, curvature--variance proxy, and log-det greedy method are
denoted by AQD-EG full, AQD-EG proxy, AQD-curvature, and AQD-log-det,
respectively.

\subsection{Algorithmic summary}
\label{subsec:algorithmic_summary}

Algorithm~\ref{alg:aqdida} summarizes the four query-selection modes. The
selection rule changes across the variants, but all selected binary responses
are assimilated by the same logistic IDA update.

\begin{algorithm}[t]
\caption{Active Query Design for IDA}
\label{alg:aqdida}
\begin{algorithmic}[1]
\Require forecast ensemble $\cX_t^f$, candidate query set $\cQ_t$, budget $B$
\State compute $\bar{x}_t^f$ and $X_t$
\If{using AQD-EG full}
    \For{each $q\in\cQ_t$}
        \State solve the two outcome-conditioned hypothetical IDA problems
        \State evaluate \eqref{eq:utility}
    \EndFor
    \State select the top $B$ individual scores
\ElsIf{using AQD-EG proxy}
    \For{each $q\in\cQ_t$}
        \State compute \eqref{eq:expected_gain_proxy}
    \EndFor
    \State select the top $B$ individual scores
\ElsIf{using AQD-curvature}
    \For{each $q\in\cQ_t$}
        \State compute \eqref{eq:proxy_score}
    \EndFor
    \State select the top $B$ individual scores
\Else
    \State compute $M_q$ in \eqref{eq:local_fisher_contribution} for all $q$
    \State initialize $\cS_t\gets\emptyset$ and $A(\cS_t)^{-1}\gets A_0^{-1}$
    \For{$r=1,\ldots,B$}
        \State select the query maximizing \eqref{eq:logdet_marginal}
        \State update $\cS_t$ and $A(\cS_t)^{-1}$ using \eqref{eq:sherman_morrison_set_update}
    \EndFor
\EndIf
\State issue the selected queries and obtain responses $o_q\in\{0,1\}$
\State encode each response as a logistic inequality likelihood
\State solve the IDA objective and update the analysis ensemble
\end{algorithmic}
\end{algorithm}

\section{Experimental Evaluation}
\label{sec:experiments}

This section evaluates AQD-IDA under a fixed binary-query budget. In the experimental notation, $N_e$ denotes the ensemble size and corresponds to $K$ in the methodological development; $\sigma_o$ denotes the common logistic noise scale, so $\sigma_q=\sigma_o$ for every candidate query. The experiments are designed to address three questions: (i) whether active interval-query selection improves over passive threshold observations, (ii) how active query selection interacts with ensemble spread calibration in a closed-loop DA system, and (iii) whether the observed conclusions persist in a more structured geophysical model. Experiments were conducted on a computer running Windows 11, equipped with an Intel Core i7 processor and 32 GB of RAM.

\subsection{Lorenz--96 binary-query benchmark}
\label{subsec:lorenz96}

\paragraph{Experimental setup.}
We evaluate binary threshold observations on a perfect-model Lorenz--96 data-assimilation benchmark. The purpose of this experiment is to compare different ways of specifying binary threshold queries under the same observation budget. The observation budget is fixed to $B=12$ scalar threshold queries per assimilation cycle. The methods differ in how the query pairs $q=(i,\theta)$ are specified. The fixed-threshold reference uses prespecified coordinates and constant thresholds. The adaptive-threshold variants keep the coordinates fixed but choose the thresholds from the forecast ensemble. The AQD methods choose both the queried coordinates and thresholds adaptively.

The Lorenz--96 state dimension is $n=40$, with forcing $F=8$. The continuous-time dynamics are
\begin{equation}
    \frac{d x_i}{dt}
    = (x_{i+1}-x_{i-2})x_{i-1}-x_i+F,
    \qquad i=1,\ldots,n,
\end{equation}
with cyclic indexing $x_{i+n}=x_i$. The model is advanced with time step $\Delta t=0.05$. For each seed, the truth is spun up for 1460 model steps, and the initial ensemble is generated by adding independent Gaussian perturbations with standard deviation 1.0 to the spun-up truth. The ensemble size is $N_e=16$. Each run consists of 14600 assimilation cycles, and all reported RMSE and spread statistics are averages over the last half of the run, i.e., the final 7300 cycles.

Once a query pair $q=(i,\theta)$ has been specified, the binary observation is generated in the same way for all methods:
\begin{equation}
    o = \mathbf{1}\{x_i + \varepsilon \geq \theta\},
    \qquad
    \varepsilon \sim \mathrm{Logistic}(0,\sigma_o),
\end{equation}
with logistic noise scale $\sigma_o=1.0$. The same interval data assimilation update is then applied to all methods. Thus, the compared methods differ in how the binary threshold queries are specified, not in the observation model or the assimilation update.

Multiplicative inflation is applied after each assimilation update to adjust
the ensemble spread. Since the analysis accuracy can be sensitive to the
inflation factor, we evaluate all methods over multiple inflation values,
\begin{equation}
    \alpha \in \{1.00,1.04,1.08,1.10,1.12,1.14,1.16\}.
\end{equation}
Here, $\alpha=1.00$ corresponds to no multiplicative inflation, whereas larger
values increase the analysis ensemble spread after the update. This sweep is
used to examine how each query-specification rule behaves under different
spread-correction levels. 

\paragraph{Candidate queries.}
Let
\[
    \mathcal{I}_{\mathrm{fix}}
    =
    \{1,4,8,11,15,18,22,25,29,32,36,40\}
\]
denote the fixed coordinate set. For the fixed-threshold reference, we use these fixed coordinates and evaluate constant thresholds from
\[
    \Theta_{\mathrm{fix}}=\{-4,-2,0,1,2,4,6\}.
\]
For each value of $\theta\in\Theta_{\mathrm{fix}}$, the same threshold is used at all fixed coordinates and at all assimilation cycles.

For the methods that select thresholds from a discrete forecast-dependent candidate set, the candidate thresholds are constructed from the forecast ensemble at each cycle. Specifically, for coordinate $i$, we use
\[
    \Theta_t(i)
    =
    \left\{
    Q_{0.10}(x^f_{t,i}),
    Q_{0.25}(x^f_{t,i}),
    Q_{0.50}(x^f_{t,i}),
    Q_{0.75}(x^f_{t,i}),
    Q_{0.90}(x^f_{t,i})
    \right\}
    \cup \{1.0\},
\]
where $Q_p(x^f_{t,i})$ denotes the $p$-quantile of the forecast ensemble at coordinate $i$. The additional absolute threshold $1.0$ is included so that the adaptive rules based on $\Theta_t(i)$ can also select a fixed reference threshold.

The fixed-location adaptive-threshold variants use the same coordinate set $\mathcal{I}_{\mathrm{fix}}$. The random-threshold rule samples the threshold uniformly from the interval between the 10th and 90th percentiles of the forecast ensemble at each fixed observation coordinate. The median-threshold rule selects $Q_{0.50}(x^f_{t,i})$, while the curvature-threshold rule selects the candidate threshold in $\Theta_t(i)$ with the largest local curvature-based score. These variants isolate the effect of adapting thresholds while keeping the observation locations unchanged.

The AQD methods use the candidate query set
\[
    \mathcal{Q}_t
=
\{(i,\theta): i\in\{1,\ldots,n\},\ \theta\in\Theta_t(i)\},
\]
and select $B$ query pairs from this set at each assimilation cycle. Thus, the
AQD methods adapt both the queried coordinates and the thresholds. For AQD-curvature, AQD-EG proxy, and AQD-EG full, the Lorenz–96 experiments use a spatially diversified selection rule: only the highest-scoring threshold is retained at each coordinate, and immediately adjacent coordinates are excluded during selection. AQD-log-det instead uses the set-based greedy criterion in Section 3.

For compactness, the experimental tables use the following method names.
``Fixed-threshold'' denotes the fixed-location, constant-threshold reference,
and ``Median-threshold'' denotes fixed-location threshold adaptation using the
forecast median. Among the AQD methods, ``AQD-curvature'' denotes the
curvature--variance proxy, ``AQD-log-det'' denotes the log-det greedy
criterion, and ``AQD-EG proxy'' denotes the lightweight expected-gain proxy.
We also include ``AQD-log-det-no-dup'', a no-duplicate variant of AQD-log-det
that prevents selecting the same coordinate more than once in a cycle. The outcome-conditioned expected-gain criterion is labeled
``AQD-EG full''.

\paragraph{Performance metrics.}
We report the analysis RMSE,
\begin{equation}
    \mathrm{RMSE}_t = \sqrt{\frac{1}{n}\|\bar{x}^a_t-x_t\|_2^2},
\end{equation}
and the analysis ensemble spread,
\begin{equation}
    \mathrm{Spread}_t = \sqrt{\frac{1}{n}\sum_{i=1}^n \mathrm{Var}_{k=1}^{N_e}(x^{a,(k)}_{t,i})}.
\end{equation}
The ratio spread/RMSE is used as a simple calibration diagnostic. A run is classified as divergent if it produces non-finite analysis RMSE values at any cycle or if the maximum analysis RMSE over the full run exceeds 100. Unless otherwise stated, means are computed over the last 7300 cycles of the non-divergent runs, and the number of non-divergent seeds is reported in the tables.

\paragraph{Results.}
Tables~\ref{tab:l96_rmse_inflation} and~\ref{tab:l96_spread_inflation} report the RMSE and spread/RMSE values over the tested inflation values. The fixed-threshold row uses $\theta=1.0$, which is the best fixed-threshold candidate identified in Table~\ref{tab:l96_fixed_threshold_sweep}. At low inflation, the AQD methods are strongly underdispersed. Increasing inflation improves their spread/RMSE ratios and substantially reduces their RMSE. The fixed-location adaptive-threshold variants also improve on the fixed-threshold reference; for example, the median-threshold rule reaches RMSE $3.450$ at $\alpha=1.10$. The largest improvement is obtained when both the queried locations and thresholds are adapted.

\begin{table}[H]
\centering
\caption{Analysis RMSE over the common set of tested multiplicative inflation factors. Lower is better. Subscripts indicate the number of non-divergent runs when fewer than 10 seeds remained after applying the divergence criterion.}
\label{tab:l96_rmse_inflation}
\resizebox{\textwidth}{!}{%
\begin{tabular}{lccccccc}
\toprule
Method & $\alpha=1.00$ & $1.04$ & $1.08$ & $1.10$ & $1.12$ & $1.14$ & $1.16$ \\
\midrule
Fixed threshold & 4.219 & 3.828 & \textbf{3.732} & 3.742 & 3.786$_{8/10}$ & 3.865$_{9/10}$ & 3.971 \\
Random threshold & 4.447 & 3.865 & 3.555 & \textbf{3.522} & 3.602 & 3.703 & 3.862 \\
Median threshold & 4.492 & 3.902 & 3.511 & \textbf{3.450} & 3.470 & 3.528 & 3.619 \\
Curvature threshold & 4.473 & 3.886 & 3.559 & \textbf{3.460} & 3.515 & 3.560 & 3.695$_{9/10}$ \\
\midrule
AQD-curvature & 4.655 & 4.093 & 3.193 & 2.791 & 2.403 & 2.228 & \textbf{2.128} \\
AQD-log-det & 4.756 & 4.265 & 3.423 & 2.970 & 2.602 & 2.413 & \textbf{2.305} \\
AQD-log-det-no-dup & 4.681 & 4.144 & 3.324 & 2.861 & 2.601 & 2.421 & \textbf{2.232} \\
AQD-EG proxy & 4.664 & 4.044 & 3.047 & 2.606 & 2.218 & 2.152 & \textbf{2.110} \\
\bottomrule
\end{tabular}}
\end{table}

\begin{table}[H]
\centering
\caption{Analysis spread/RMSE over the same tested multiplicative inflation factors. Values much smaller than one indicate underdispersed ensembles.}
\label{tab:l96_spread_inflation}
\resizebox{\textwidth}{!}{%
\begin{tabular}{lccccccc}
\toprule
Method & $\alpha=1.00$ & $1.04$ & $1.08$ & $1.10$ & $1.12$ & $1.14$ & $1.16$ \\
\midrule
Fixed threshold & 0.301 & 0.492 & 0.667 & 0.746 & 0.822$_{8/10}$ & 0.892$_{9/10}$ & 0.956 \\
Random threshold & 0.177 & 0.323 & 0.500 & 0.589 & 0.664 & 0.744 & 0.821 \\
Median threshold & 0.170 & 0.301 & 0.464 & 0.542 & 0.609 & 0.670 & 0.730 \\
Curvature threshold & 0.171 & 0.303 & 0.463 & 0.552 & 0.620 & 0.695 & 0.757$_{9/10}$ \\
\midrule
AQD-curvature & 0.099 & 0.175 & 0.326 & 0.430 & 0.559 & 0.656 & 0.735 \\
AQD-log-det & 0.088 & 0.166 & 0.296 & 0.405 & 0.514 & 0.598 & 0.675 \\
AQD-log-det-no-dup & 0.095 & 0.168 & 0.302 & 0.408 & 0.500 & 0.586 & 0.690 \\
AQD-EG proxy & 0.099 & 0.179 & 0.352 & 0.477 & 0.616 & 0.687 & 0.759 \\
\bottomrule
\end{tabular}}
\end{table}

\begin{figure}[H]
\centering
\includegraphics[width=0.98\textwidth]{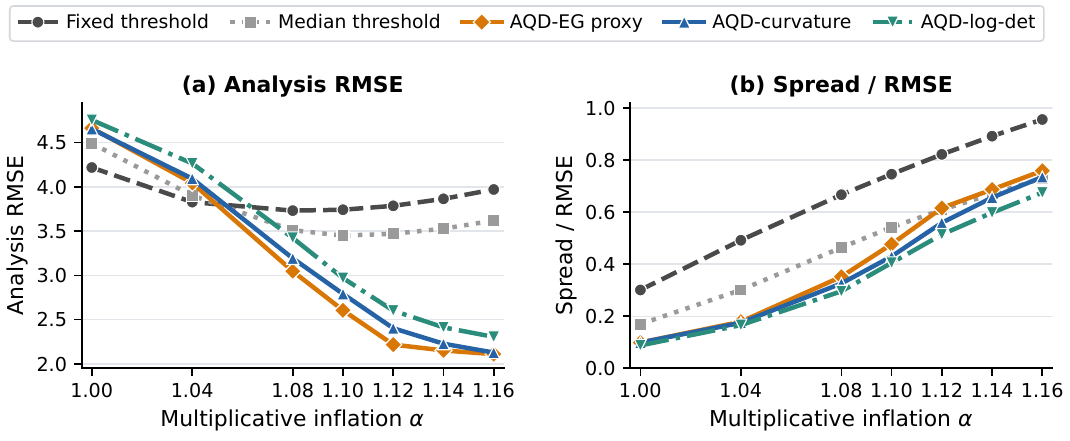}
\caption{Performance over the tested multiplicative inflation values for a representative subset of the methods. Panel (a) shows analysis RMSE and panel (b) shows the spread/RMSE ratio. AQD has the smallest spread/RMSE at low inflation, indicating stronger ensemble contraction; after inflation is increased, it achieves the lowest RMSE among the lightweight query-specification rules.}
\label{fig:l96_performance_alpha}
\end{figure}

The fixed-location adaptive-threshold variants are useful in this setting. At $\alpha=1.10$, the median-threshold rule achieves RMSE $3.450$, improving over the fixed-threshold reference with RMSE $3.742$. This indicates that threshold adaptation alone can improve the informativeness of binary observations. AQD expected-gain reaches RMSE $2.606$ at the same inflation value and $2.110$ at $\alpha=1.16$, showing the additional benefit of adapting the queried locations together with the thresholds.

\paragraph{Fixed-threshold candidates.}
For the fixed-threshold reference, the constant threshold is itself a design choice. We therefore evaluated the same fixed observation indices for the candidate thresholds
\begin{equation}
    \theta \in \{-4,-2,0,1,2,4,6\}
\end{equation}
and for the same multiplicative inflation values used above. Table~\ref{tab:l96_fixed_threshold_sweep} reports, for each threshold candidate, the best result over the tested inflation values.

\begin{table}[H]
\centering
\caption{Fixed-threshold results for the tested constant-threshold candidates. For each $\theta$, the table reports the best analysis RMSE over the tested multiplicative inflation values.}
\label{tab:l96_fixed_threshold_sweep}
\begin{tabular}{ccccc}
\toprule
Fixed threshold $\theta$ & Best $\alpha$ & Analysis RMSE & Spread/RMSE & Non-divergent runs \\
\midrule
$-4$ & 1.04 & 3.966 & 0.941 & 10/10 \\
$-2$ & 1.04 & 3.869 & 0.681 & 10/10 \\
$0$  & 1.08 & 3.761 & 0.709 & 9/10 \\
$1$  & 1.08 & \textbf{3.732} & 0.667 & 10/10 \\
$2$  & 1.08 & 3.755 & 0.647 & 10/10 \\
$4$  & 1.08 & 3.868 & 0.677 & 10/10 \\
$6$  & 1.04 & 4.009 & 0.629 & 10/10 \\
\bottomrule
\end{tabular}
\end{table}

Among the tested threshold candidates, the best fixed-threshold result is obtained at $\theta=1.0$ and $\alpha=1.08$, with an analysis RMSE of $3.732$. 

\paragraph{Expected-gain proxy and outcome-conditioned criterion.}
The expected-gain proxy is designed to approximate the uncertainty reduction of a binary query without solving an outcome-conditioned IDA problem for every possible outcome of every candidate. To check how much accuracy is lost by this approximation, we also evaluate the outcome-conditioned expected-gain criterion (AQD-EG full) for
\begin{equation}
    \alpha \in \{1.12,1.14,1.16\}.
\end{equation}
For each candidate query, AQD-EG full solves the two hypothetical IDA updates corresponding to the two possible binary outcomes and evaluates the expected posterior uncertainty. This is closer to the literal expected posterior uncertainty-reduction objective, but it is substantially more expensive.

Table~\ref{tab:l96_full_eg} compares the proxy and outcome-conditioned expected-gain criteria at the same inflation values. The two criteria give very similar RMSE values. At $\alpha=1.12$ and $\alpha=1.14$, the proxy is slightly better, while at $\alpha=1.16$ AQD-EG full is slightly better. The best AQD-EG full result is RMSE $2.097$, compared with RMSE $2.110$ for the best proxy result. This improvement is only about $0.6\%$, whereas the average query-selection time increases from about $9.0$ ms to $74.0$ ms per assimilation cycle. Thus, the proxy captures almost all of the benefit of the outcome-conditioned calculation at roughly one eighth of the selection cost.

\begin{table}[H]
\centering
\caption{Comparison between the lightweight expected-gain proxy and the outcome-conditioned expected-gain criterion (AQD-EG full). Selection time is the average query-selection time per assimilation cycle, computed over the last 7300 cycles. All entries are stable for all 10 seeds.}
\label{tab:l96_full_eg}
\resizebox{\textwidth}{!}{%
\begin{tabular}{lccccc}
\toprule
Method & $\alpha$ & Analysis RMSE & Spread/RMSE & Selection time [ms] & Non-divergent runs \\
\midrule
AQD-EG proxy & 1.12 & 2.218 & 0.616 & 8.96 & 10/10 \\
AQD-EG full  & 1.12 & 2.271 & 0.600 & 71.21 & 10/10 \\
\midrule
AQD-EG proxy & 1.14 & 2.152 & 0.687 & 8.97 & 10/10 \\
AQD-EG full  & 1.14 & 2.199 & 0.670 & 72.66 & 10/10 \\
\midrule
AQD-EG proxy & 1.16 & 2.110 & 0.759 & 8.99 & 10/10 \\
AQD-EG full  & 1.16 & \textbf{2.097} & 0.753 & 73.97 & 10/10 \\
\bottomrule
\end{tabular}}
\end{table}

\subsection{Spherical Quasi-Geostrophic (SHQG) benchmark}
\label{subsec:shqg}

\paragraph{Experimental setup.}
We next evaluate AQD-IDA on the SHQG benchmark. This experiment is intended to test whether the conclusions from the lower-dimensional Lorenz--96 study persist in a more structured geophysical model with spatially organized wind fields. The SHQG model is run with spectral resolution $T=21$, physical grid parameter $T_{\mathrm{grid}}=31$, three vertical levels, time step $\Delta t=6$ hours, and a data-assimilation window of one model step. Each run uses a one-year spin-up period (1460 model steps), followed by 3650 assimilation cycles, and the reported metrics are averaged over the last half of the run, i.e., the final 1825 cycles. All SHQG results in this section use five random seeds, with seeds $0,\ldots,4$.

The evaluation metric is the physical-wind analysis RMSE computed from the horizontal wind components $u$ and $v$. The physical wind is diagnosed from the streamfunction $\psi$ using the spherical-gradient operators associated with the spherical-harmonic representation. The binary threshold queries are applied to the physical wind components at selected vertical levels and latitude--longitude grid points. Unless otherwise stated, we use ensemble size $N_e=50$, query budget $B=128$, logistic noise scale $\sigma_o=2.2053$, and multiplicative inflation $\alpha=1.02$. Candidate observation targets are constructed on a strided latitude--longitude grid with stride 4 in both directions, and the number of candidate queries is capped at 2000 per cycle. For adaptive-threshold and AQD rules, the threshold candidates at each candidate target are the forecast quantiles $25\%$, $50\%$, and $75\%$.

For the fixed-threshold reference, the constant threshold is varied over
\(\theta\in\{-4,-2,0,2,4\}\) under the representative SHQG configuration, and
the best resulting fixed-threshold entry is used in the main comparison. The
Median-threshold and AQD entries follow the method definitions introduced in
the Lorenz--96 benchmark.

\paragraph{Main comparison.}
Table~\ref{tab:shqg_main_results} reports the main SHQG comparison together with the fixed-threshold sweep. The
fixed-threshold entry uses $\theta=2$, the best constant threshold in the tested set. The median-threshold rule improves
substantially over this fixed-threshold reference, reducing the analysis RMSE
from \(0.9255\) to \(0.4178\). Joint location--threshold selection gives a
further reduction. AQD-curvature achieves the lowest mean RMSE, \(0.1937\),
while AQD-log-det gives a very similar value, \(0.1945\). These values
correspond to an approximately \(79.1\%\) reduction relative to the best
fixed-threshold reference and a \(53.6\%\) reduction relative to the
median-threshold rule. AQD-EG proxy also improves over both references,
although it is less effective than AQD-curvature and AQD-log-det in this
configuration.

\begin{table}[H]
\centering
\caption{SHQG representative comparison and fixed-threshold sweep over five seeds. Values are last-half physical-wind analysis RMSE. Lower RMSE is better.}
\label{tab:shqg_main_results}
\small
\textbf{(a) Main comparison}\par\vspace{0.25em}
\begin{tabular}{lcccc}
\toprule
Method & Analysis RMSE & Std. & Selection time [ms] & Unique locations \\
\midrule
Fixed-threshold & 0.9255 & 0.0198 & -- & 128.0 \\
Median-threshold & 0.4178 & 0.0083 & 23.14 & 128.0 \\
AQD-EG proxy & 0.2669 & 0.0041 & 29.43 & 49.8 \\
AQD-curvature & \textbf{0.1937} & 0.0061 & 27.18 & 50.0 \\
AQD-log-det & 0.1945 & 0.0036 & 79.23 & 51.9 \\
\bottomrule
\end{tabular}

\vspace{0.8em}
\textbf{(b) Fixed-threshold sweep}\par\vspace{0.25em}
\begin{tabular}{ccc}
\toprule
Fixed threshold $\theta$ & Analysis RMSE & Std. \\
\midrule
$-4$ & 1.2061 & 0.0209 \\
$-2$ & 1.0489 & 0.0316 \\
$0$  & 0.9733 & 0.0283 \\
$2$  & \textbf{0.9255} & 0.0198 \\
$4$  & 0.9430 & 0.0165 \\
\bottomrule
\end{tabular}
\end{table}

Figure~\ref{fig:shqg_main} shows the same main comparison visually. The error bars
are much smaller than the gap between the fixed-threshold reference, the
median-threshold rule, and the best AQD methods, indicating that the observed
differences are not explained by seed-to-seed variability.
\begin{figure}[H]
\centering
\includegraphics[width=0.86\textwidth]{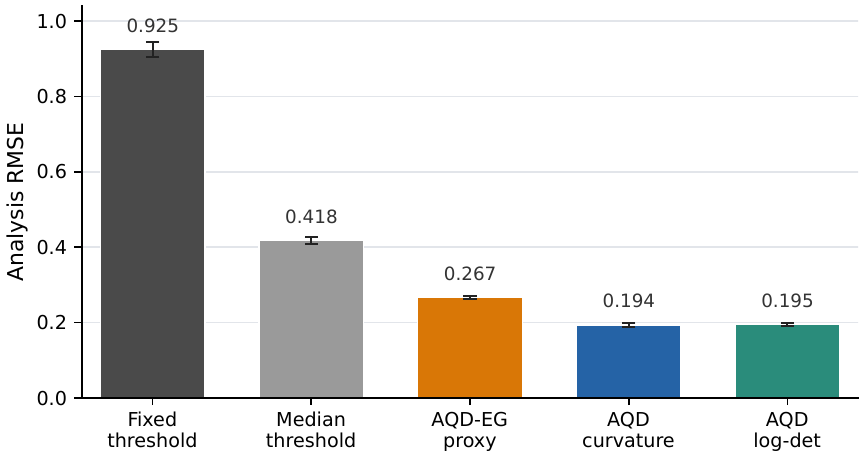}
\caption{SHQG main comparison over five seeds. Bars show last-half physical-wind analysis RMSE with one-standard-deviation error bars.}
\label{fig:shqg_main}
\end{figure}

\paragraph{Fixed-threshold sweep.}
The lower block of Table~\ref{tab:shqg_main_results} reports the fixed-threshold sweep
used to define the fixed-threshold reference in the main comparison. Among the
tested constant thresholds, $\theta=2$ gives the lowest RMSE, $0.9255$. This
value is used as the fixed-threshold entry in the main SHQG comparison.

\subsection{SHQG ablations}
\label{subsec:shqg_ablations}

We next vary the query budget, ensemble size, and logistic noise scale one
factor at a time, using the representative SHQG setting as the reference:
\(B=128\), \(N_e=50\), and \(\sigma_o=2.2053\). In each ablation, all other
parameters are kept fixed at their representative values. For the Fixed-threshold
baseline, \(\theta=2\) is used throughout all ablations.

\begin{table}[H]
\centering
\caption{SHQG ablations over five seeds. The upper table reports the last-half
physical-wind analysis RMSE. The lower table reports the average query-selection
time per assimilation cycle for the budget ablation. In each ablation, only the
indicated factor is varied from the representative setting
$B=128$, $N_e=50$, and $\sigma_o=2.2053$.}
\label{tab:shqg_ablations}
\scriptsize

\begin{tabular}{llccccc}
\toprule
Factor & Setting
& Fixed-threshold
& Median-threshold
& AQD-EG proxy
& AQD-curvature
& AQD-log-det \\
\midrule
Budget $B$
& 32  & unstable$^{\dagger}$ & 1.198 & 0.557 & 0.391 & \textbf{0.383} \\
& 64  & 1.406 & 0.613 & 0.382 & \textbf{0.266} & 0.267 \\
& 128 & 0.925 & 0.418 & 0.267 & \textbf{0.194} & 0.195 \\
& 256 & 0.505 & 0.269 & 0.193 & 0.147 & \textbf{0.146} \\
\midrule
Ensemble size $N_e$
& 25  & 0.915 & 0.368 & 0.208 & \textbf{0.139} & $0.677^{\ddagger}$ \\
& 50  & 0.925 & 0.418 & 0.267 & \textbf{0.194} & 0.195 \\
& 100 & 1.057 & 0.478 & 0.319 & 0.241 & \textbf{0.241} \\
\midrule
Logistic scale $\sigma_o$
& 1.1027 & 0.629 & 0.203 & 0.137 & 0.097 & \textbf{0.097} \\
& 2.2053 & 0.925 & 0.418 & 0.267 & \textbf{0.194} & 0.195 \\
& 4.4106 & 1.493 & 0.919 & 0.548 & 0.402 & \textbf{0.401} \\
\bottomrule
\end{tabular}

\vspace{0.8em}

\begin{tabular}{cccc}
\toprule
& \multicolumn{3}{c}{Query-selection time [ms]} \\
\cmidrule(lr){2-4}
Budget $B$
& AQD-EG proxy
& AQD-curvature
& AQD-log-det \\
\midrule
32  & 30.1 & 27.2 & 41.8 \\
64  & 30.6 & 27.9 & 53.8 \\
128 & 29.4 & 27.2 & 79.2 \\
256 & 30.2 & 28.9 & 179.6 \\
\bottomrule
\end{tabular}

\vspace{0.4em}
\begin{minipage}{0.94\textwidth}
\footnotesize
$^{\dagger}$At $B=32$, one fixed-threshold seed produced a non-finite RMSE
and three finite runs exhibited severe instability, so no aggregate RMSE is
reported.\\
$^{\ddagger}$For $N_e=25$, AQD-log-det had one unstable seed with analysis
RMSE $2.733$; the other four seeds were much closer to the AQD-curvature values.
\end{minipage}
\end{table}

The query-budget part of Table~\ref{tab:shqg_ablations} shows a consistent
budget trend. Increasing $B$ improves all stable methods, and the ordering is
stable: the median-threshold rule improves over the fixed-threshold reference,
while AQD-curvature and AQD-log-det provide the lowest RMSE across all tested
budgets.

The lower part of Table~\ref{tab:shqg_ablations} reports the query-selection
times. AQD-log-det becomes more expensive as the query budget increases,
rising from $41.8$ ms at $B=32$ to $179.6$ ms at $B=256$. By contrast,
AQD-curvature remains nearly constant, between $27.2$ and $28.9$ ms, while
achieving RMSE values close to those of AQD-log-det. AQD-EG proxy also remains
around $30$ ms, but its RMSE is consistently higher than those of curvature
and log-det in this SHQG setting.

The ensemble-size ablation shows that the advantage of AQD is maintained
across the tested values of $N_e$. Increasing the ensemble size from
$N_e=50$ to $N_e=100$ does not substantially change the overall performance
of the AQD methods. AQD-curvature consistently achieves the lowest or
near-lowest RMSE, while AQD-log-det also performs well except for one
unstable seed at $N_e=25$.

The logistic-scale part of Table~\ref{tab:shqg_ablations} shows the expected
degradation as the logistic scale increases: sharper binary likelihoods provide
stronger information, whereas larger scales make each binary response less
informative. The relative ordering is nevertheless stable. Median-threshold
selection improves over the fixed-threshold reference, and
AQD-curvature/AQD-log-det remain the best methods at all tested noise scales.

\begin{figure}[H]
\centering
\includegraphics[width=0.98\textwidth]{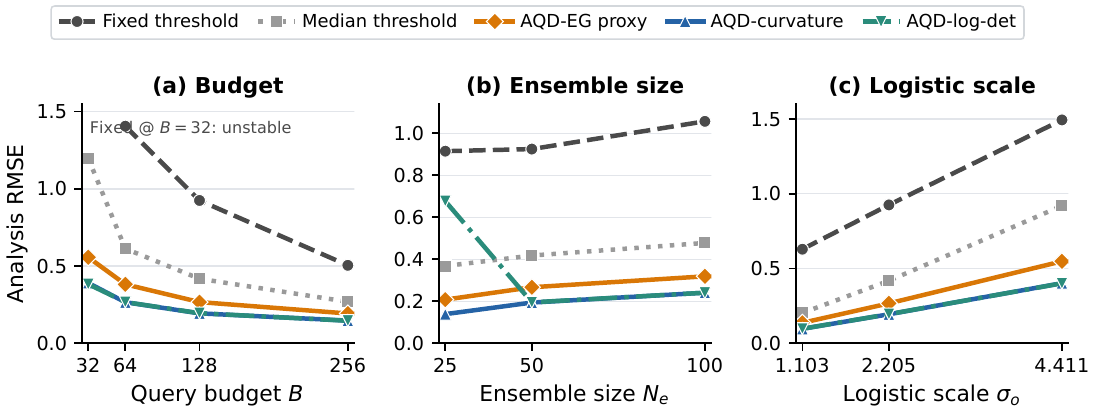}
\caption{SHQG ablation trends over five seeds. Curves show mean analysis RMSE
across seeds for (a) query budget, (b) ensemble size, and (c) logistic scale.
The fixed-threshold point at $B=32$ is omitted because that case is unstable.}
\label{fig:shqg_ablations}
\end{figure}

\section{Related Work and Positioning}
\label{sec:related-work}

This section reviews related work on data assimilation with nonstandard observations, adaptive observation and data selection, and active sensing and experimental design.

\subsection{Ensemble data assimilation and non-Gaussian ensemble updates}

Ensemble data assimilation represents forecast uncertainty using an ensemble of model realizations. 
The ensemble Kalman filter (EnKF) introduced Monte Carlo estimates of forecast-error statistics into sequential geophysical data assimilation \citep{evensen1994sequential}, while deterministic square-root formulations such as the local ensemble transform Kalman filter (LETKF) enabled efficient ensemble updates in high-dimensional spatiotemporal systems \citep{hunt2007letkf}. 
These methods provide the forecast ensemble, ensemble-space perturbations, and sample covariance information used throughout this paper.

The classical EnKF framework is most directly suited to settings in which forecast and observation uncertainties are reasonably approximated by Gaussian distributions. 
A substantial body of work has therefore considered ensemble-based updates for nonlinear and non-Gaussian problems. 
Particle filtering provides a general Bayesian approach without imposing a Gaussian posterior approximation, although direct application to high-dimensional geophysical systems is challenging because of particle degeneracy \citep{vanleeuwen2009particle}. 
Localized particle-filter formulations have subsequently been developed to alleviate this difficulty in large dynamical systems \citep{poterjoy2016localized}. 
Within ensemble-filtering approaches, the rank histogram filter performs scalar Bayesian updates using ensemble-based representations of non-Gaussian prior distributions \citep{anderson2010nongaussian}, while moment-matching and Gaussian-anamorphosis approaches provide alternative mechanisms for handling nonlinear or non-Gaussian observation--state relationships \citep{lei2011moment,amezcua2014gaussian}.

More recently, the quantile-conserving ensemble filter framework (QCEFF) has provided a flexible scalar-update formulation in which prior densities and likelihoods can be non-Gaussian and the posterior ensemble is constructed through quantile-preserving transformations \citep{anderson2022qceff1}. 
Subsequent developments use probit and probability-integral transformations to regress observation-space increments back to state variables while accommodating bounded and strongly non-Gaussian variables \citep{anderson2023qceff2}.

AQD-IDA is complementary to these developments. 
Rather than introducing a general-purpose non-Gaussian ensemble filter, it focuses on interval, inequality, and binary-threshold observations and addresses the preceding observation-design problem: which such observations should be actively acquired before they are assimilated through the IDA update.

\subsection{Inequality, censored, qualitative, binary, and interval observations}

Several data-assimilation studies have recognized that useful information may be inequality-valued rather than point-valued. Inequality constraints have been incorporated into ocean data assimilation to enforce or exploit physical restrictions such as nonnegative layer thicknesses \citep{thacker2007inequality}. Related constrained-filtering approaches, including truncated-Gaussian ideas, treat bounds as part of the posterior structure rather than as ordinary Gaussian observations \citep{lauvernet2009truncated}. These works show that inequality information can be meaningful in geophysical estimation, but the inequalities are typically prescribed constraints or fixed observation limits.

Semi-qualitative ensemble filtering provides another close connection. The EnKF-SQ was developed for observations with detection limits, where out-of-range values are qualitative by nature and can be interpreted as inequalities rather than discarded \citep{shah2018semiqualitative}. It has also been applied in a coupled ocean--sea-ice setting for sea-ice-thickness observations with upper detection limits \citep{shah2019seaice}. More recently, binary data have been analyzed in variational data assimilation from the viewpoints of information worth, observability, and identifiability \citep{srivastava2025binary}. These studies are highly relevant to AQD-IDA because they establish that coarse, range-limited, or binary observations can carry useful state information.

Interval Data Assimilation (IDA) is the direct methodological foundation of the present paper. IDA formulates equality, inequality, and finite-interval observations within a single Bayesian framework by treating all observations as intervals and by using smooth logistic interval likelihoods \citep{leduc2026ida}. In IDA, a point observation is a zero-width interval, a one-sided inequality is a half-infinite interval, and a finite interval is represented by lower and upper bounds. AQD-IDA keeps this observation philosophy but changes the problem setting: the interval observation is not assumed to be given. Instead, the sensing system chooses a query, such as ``is $H_i x_t > \theta$?'', and the location--threshold pair $(i,\theta)$ becomes the object of active design.

\subsection{Targeted observations, observation impact, and data selection}

Adaptive observing is a long-standing theme in meteorology. \citet{lorenz1998optimal} studied optimal sites for supplementary observations in a 40-variable model, showing how additional observations can be targeted to improve forecasts. \citet{bishop2001adaptive} developed adaptive sampling with the ensemble transform Kalman filter, using ensemble information to estimate how candidate observations would reduce forecast error variance. The broader targeted-observation literature, reviewed by \citet{majumdar2016review}, emphasizes that targeted observations can be beneficial but that their impact depends strongly on the verification metric, observing platform, numerical model, data-assimilation system, and case selection.

A related operational literature evaluates the impact of observations after they are assimilated. Adjoint-based forecast-sensitivity-to-observations methods estimate how each assimilated observation changes a chosen forecast-error measure \citep{langland2004impact}. Ensemble-based diagnostics and EFSO-type methods have also been used to accelerate the development of selection criteria for new observing systems; for example, \citet{lien2018efso} used EFSO statistics to refine data-selection criteria for satellite precipitation assimilation. Information-based data selection for ensemble assimilation similarly selects a subset of already available observations according to expected information content and finite-ensemble considerations \citep{migliorini2013information}.

AQD-IDA shares the flow-dependent nature of targeted observing but addresses a different observation-design problem. First, rather than selecting only where or when observations should be acquired, AQD-IDA determines what binary query to issue by jointly selecting the observation target and inequality threshold. Second, its query-selection criteria are derived directly from the logistic likelihood and posterior uncertainty structure used in the subsequent IDA update. Third, whereas targeted observing typically seeks observations that improve a subsequent forecast, AQD-IDA designs the information supplied to the current assimilation update.

\subsection{Active learning, binary-response design, and one-bit sensing}

Outside geophysical data assimilation, AQD-IDA is conceptually related to active learning and Bayesian experimental design. Active learning selects examples or queries whose responses are expected to be informative for a predictive model \citep{settles2009active,cacciarelli2024active}, while Bayesian experimental design chooses experimental conditions to reduce posterior uncertainty, including sequential settings in which later designs depend on previously acquired data \citep{chaloner1989bayesian,foster2021deep}. These approaches share with AQD-IDA the general principle that the information obtained from an observation depends on the query or experimental condition that is chosen.

One-bit and quantized sensing provide a more direct connection to binary-threshold observations. In one-bit sensing, measurements retain only whether a linear quantity lies above or below a threshold, and adaptive methods can adjust these thresholds using previous estimates to improve reconstruction \citep{fang2016adaptive,beheshti2022adaptive}. Recent work has also examined how nonzero or time-varying thresholds affect Fisher information and estimation accuracy in one-bit estimation \citep{xiao2023covariance}. These studies establish that the threshold of a binary measurement can itself be an important design variable.

AQD-IDA, however, addresses a different estimation problem. Rather than reconstructing a fixed signal or estimating static parameters, it performs sequential data assimilation for a dynamically evolving state. At each cycle, the forecast ensemble is used to determine which quantity to query and what inequality to ask, and the resulting binary response is assimilated through IDA before propagation to the next cycle. The selection criteria are derived from the uncertainty and logistic-likelihood structure of the subsequent IDA update.

\section{Conclusion}\label{sec:conclusion}

This paper proposed Active Query Design for Interval Data Assimilation
(AQD-IDA), a framework for selecting informative binary-threshold observations
before they are acquired and assimilated through IDA. Rather than treating
observation locations and thresholds as fixed, AQD-IDA uses the
flow-dependent forecast ensemble to jointly design the observation target
and threshold under a finite query budget. We developed several
query-selection criteria based on likelihood curvature, local information,
and expected posterior uncertainty reduction, including computationally
efficient rank-one and log-det formulations in the ensemble coefficient
space.

Experiments on Lorenz--96 and SHQG demonstrated that adaptive threshold
selection can substantially improve assimilation accuracy over fixed
thresholds, and that jointly selecting observation targets and thresholds
provides further gains under the same binary-query budget. The results also
highlight the interaction between active query design and ensemble spread
calibration. Among the proposed criteria, the curvature--variance score
provided a particularly favorable balance between assimilation accuracy
and computational cost, while the log-det formulation offered a principled
way to account for redundancy among multiple queries.

These results suggest that the design of the observation itself can be an
important component of data assimilation when measurements are available
only in interval or binary form. Future work will consider adaptive spread
calibration, heterogeneous observation costs, more realistic sensing and
communication constraints, and extensions to operational observing systems.

\section*{Data Availability Statement}
The source code used for the AQD-IDA experiments is publicly available at
\url{https://github.com/hashimo262/AQD-IDA}. The repository contains the
AQD-IDA-specific code for the Lorenz--96 and SHQG experiments. The dynamical
models and associated utilities used by these experiments are provided by the
DAJax project, available at
\url{https://github.com/leducvn/dajax}.

\bibliographystyle{plainnat}
\bibliography{references}

\end{document}